\documentclass[11pt,reqno]{amsart}

\usepackage{amscd,amssymb,amsmath,amsthm}
\usepackage[arrow,matrix]{xy}
\usepackage{graphicx}
\usepackage{color}
\usepackage{cite}
\topmargin=0.1in \textwidth5.7in \textheight7.7in

\newtheorem{thm}{Theorem}
\newtheorem{defn}{Definition}
\newtheorem{lemma}{Lemma}

\newtheorem{rk}{Remark}

\numberwithin{equation}{section} \setcounter{tocdepth}{1}


\def\s{\sigma}

\def\s{\sigma}

\def\s{\sigma}

\def\b{\beta}


\begin{document}
\title[On free energies  of the Potts  model on the Cayley tree]{On free energies  of the Potts  model on the Cayley tree}

\author{U. A. Rozikov,  M.M. Rahmatullaev}

\address{U.\ A.\ Rozikov\\ Institute of mathematics,
29, Do'rmon Yo'li str., 100125, Tashkent, Uzbekistan.}
\email {rozikovu@yandex.ru}

\address{M.\ M.\ Rahmatullaev\\ Institute of mathematics,
29, Do'rmon Yo'li str., 100125, Tashkent, Uzbekistan.}
\email{mrahmatullaev@rambler.ru}

\begin{abstract}
For the Potts model on the Cayley tree, some explicit formulae of
the free energies and entropies (according to vector-valued
boundary conditions (BCs)) are obtained. They include
translation-invariant, periodic, Dobrushin-like BCs,
as well as those corresponding to weakly periodic Gibbs measures.
\end{abstract}
\maketitle

{\bf Mathematics Subject Classifications (2010).} 82B26 (primary);
60K35 (secondary)

{\bf{Key words.}} Cayley tree, Potts model, boundary condition,
Gibbs measure, free energy, entropy.

\section{Introduction and definitions}

On Cayley trees, not only Gibbs measures but also the free energy (and the entropy) depend on the BC.
A study of this dependence is given in \cite{GRRR}.
 It is shown there that for all previously known BCs the free energies exist.  Later, in \cite{GHRR}
 a construction  of  new Gibbs measures (called alternating Gibbs measures)
 is presented and their corresponding free energies are given. Moreover, it was proved that free energy of some alternating Gibbs measures may not exist.

The purpose of this paper is to study free energies of the Potss
model on the Cayley tree.

The Cayley tree $\Gamma^k$ (See [2]) of order $ k\geq 1 $ is an
infinite tree, i.e., a graph without cycles, from each vertex of
which exactly $ k+1 $ edges issue. Let $\Gamma^k=(V, L, i)$ ,
where $V$ is the set of vertices of $ \Gamma^k$, $L$ is the set of
edges of $ \Gamma^k$ and $i$ is the incidence function associating
each edge $l\in L$ with its endpoints $x,y\in V$. If
$i(l)=\{x,y\}$, then $x$ and $y$ are called {\it nearest
neighboring vertices}, and we write $l=\langle x,y\rangle$.

 The distance $d(x,y), x,y\in V$ on the Cayley tree is defined by
$$
d(x,y)=\min\{ d | \exists x=x_0, x_1, \dots, x_{d-1}, x_d=y\in V \ \
\mbox{such that}  \ \
 \langle x_0,x_1\rangle,\dots, \langle x_{d-1},x_d\rangle\}.$$

For the fixed $x^0\in V$ we set $ W_n=\{x\in V\ \ |\ \
d(x,x^0)=n\},$
$$ V_n=\{x\in V\ \ | \ \  d(x,x^0)\leq n\}, \ \
L_n=\{l=\langle x,y\rangle\in L \ \ |\ \  x,y\in V_n\}.$$

It is known that there exists a one-to-one correspondence between
the set  $V$ of vertices  of the Cayley tree of order $k\geq 1$
and the group $G_{k}$ of the free products of $k+1$ cyclic  groups
$\{e, a_i\}$, $i=1,...,k+1$ of the second order (i.e. $a^2_i=e$,
$a^{-1}_i=a_i$) with generators $a_1, a_2,..., a_{k+1}$.

Denote by $S(x)$ the set of {\it direct successors} of $x\in G_k$.
Let $S_1(x)$ be the set of all nearest neighboring vertices of
$x\in G_k,$ i.e. $S_1(x)=\{y\in G_k: \langle x,y\rangle\}$ and
$x_{\downarrow}$ denotes the unique element of the set
$S_1(x)\setminus S(x)$.

We consider models where spin takes values from the set $\Phi = \ \{1,
2,\dots, q \}$, $q\geq 2 $. A configuration $\s$ is defined as a function $x\in V\to\s (x) \in\Phi$; the set of
all configurations coincides with $\Omega =\Phi^{V}$.

The Hamiltonian of the Potts model has the form
\begin{equation}\label{1}
H(\sigma)=-J\sum_{\langle x,y\rangle\in L}
\delta_{\sigma(x)\sigma(y)},
\end{equation}
where $J\in R$,
$\delta_{uv}$ is the Kronecker symbol.

We identify the set $\Phi$ by the set $\{\s_1,...,\s_q\}$, where $\s_i \in R^{q-1}$ such
that
$$
\s_i \s_j=\left\{\begin{array}{ll}
-\frac{1}{q-1}, \ \ \mbox{if} \ \ i\ne j\\[2mm]
1, \ \ \mbox{if} \ \ i= j.
\end{array}\right.
$$

Then we have
\begin{equation}\label{2}
\delta_{\s(x)\s(y)}=\frac{q-1}{q}\left(\s(x)\s(y)+\frac{1}{q-1}\right).
\end{equation}

Using this formula the Hamiltonian of the Potts model can be reduced to the Hamiltonian of the Ising model with $q$ spin values:
\begin{equation}\label{1'}
H(\sigma)=-J\sum_{\langle x,y\rangle\in L}
\sigma(x)\sigma(y).
\end{equation}

We fix a basis $\{e_1,\dots,e_{q-1}\}$ on $R^{q-1}$, such that
$e_i=\s_i, i=1,2,\dots,q-1$.
It is clear that
\begin{equation}\label{3}
\sum_{i=1}^q \s_i=0.
\end{equation}

We note that if $h=(h_1, \dots, h_{q-1})$, then
$$
h \s_i=\left\{\begin{array}{ll}
\frac{q}{q-1}h_i-\frac{1}{q-1}\sum_{j=1}^{q-1}h_j, \ \ \mbox{if} \ \ i=1,\dots, q-1,\\[3mm]
-\frac{1}{q-1}\sum_{j=1}^{q-1}h_j, \ \ \mbox{if} \ \ i=q.
\end{array}\right.
$$

Define a finite-dimensional distribution of a probability measure $\mu$ in the volume $V_n$ as
 \begin{equation}\label{4}
 \mu_n(\sigma_n)=Z_n^{-1}\exp\left\{-\beta
H_n(\sigma_n)+\sum_{x\in W_n}h_x \sigma(x)\right\},
\end{equation}
where $\beta=1/T$, $T>0$ is the temperature, $ h_x\in R^{q-1}$,
$$H_n(\sigma_n)=-J\sum_{\langle x,y\rangle\in L_n}
{\sigma(x)\sigma(y)}$$  and $Z^{-1}_n$ is the normalizing factor,
i.e.
$$
Z_n=Z_n(\b,h)=\sum_{\s_n \in \Omega_n}\exp \left( -\beta
H_n(\sigma_n)+\sum_{x\in W_n}h_x\s(x)\right).
$$

The collection of vectors $h=\{h_x\in R^{q-1}, x\in V\}$
stands for (generalized) BC.

The following limit (if it exist) is called {\it free energy}
corresponding to BC $h$:
$$
E(\b,h)=-\lim_{n\to\infty }\frac{1}{\b |V_n|}\ln Z_n(\b,h).
$$

We say that probability distributions (\ref{4}) are compatible if for all
 $n\geq 1$ and $\sigma_{n-1}\in \Phi^{V_{n-1}}$ we have
 \begin{equation}\label{5}
\sum_{\s^{(n)}\in
\Phi^{W_{n}}}\mu_n(\sigma_{n-1}\vee\s^{(n)})=\mu_{n-1}(\sigma_{n-1}),
\end{equation}
where $\sigma_{n-1}\vee\s^{(n)}$ is the concatenation of the
configurations.

In this case, there exists a unique $\mu$ on $\Phi^V$ such, that for
all $n$ and $\sigma_n\in \Phi^{V_n}$ we have
$$\mu(\{\sigma|_{V_n}=\sigma_n\})=\mu_n(\sigma_n).$$

Such measure is called a {\it limiting Gibbs measure} corresponding to
Hamiltonian (\ref{1'}) and to the vector-valued function $h_x, x\in V$.

The next statement describes the condition on $h_x$ ensuring that
 $\mu_n(\sigma_n)$ are compatible.
\begin{thm}\label{t1} Measures (\ref{4}) satisfy (\ref{5}) if only if
for all $x\in V\setminus\{x^0\}$ the following equation holds:
\begin{equation}\label{6}
h_x=\sum_{y\in S(x)}F(h_y,\theta),
\end{equation} where $F: h=(h_1,
\dots,h_{q-1})\in R^{q-1}\to F(h,\theta)=(F_1,\dots,F_{q-1})\in
R^{q-1}$ is defined as
$$F_i=\ln\left({(\theta-1)e^{h_i}+\sum_{j=1}^{q-1}e^{h_j}+1\over
\theta+ \sum_{j=1}^{q-1}e^{h_j}}\right),\qquad
\theta=\exp(J\beta).$$
\end{thm}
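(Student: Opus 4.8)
The plan is to convert the Kolmogorov-type consistency condition \eqref{5} into a pointwise functional equation on the fields $h_x$ by exploiting the tree structure, and then to read off $F$ from a single-edge summation. The starting observation is that $V_n=V_{n-1}\cup W_n$ and that the only edges in $L_n\setminus L_{n-1}$ are those joining each $x\in W_{n-1}$ to its direct successors $S(x)\subset W_n$. Hence, writing $\sigma_n=\sigma_{n-1}\vee\sigma^{(n)}$, the Hamiltonian splits as $H_n(\sigma_n)=H_{n-1}(\sigma_{n-1})-J\sum_{x\in W_{n-1}}\sum_{y\in S(x)}\sigma(x)\sigma(y)$, so that in \eqref{4} the factor $\exp\{-\beta H_{n-1}(\sigma_{n-1})\}$ pulls out and the boundary sum $\sum_{y\in W_n}h_y\sigma(y)$ attaches to the new spins only.

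Next I would carry out the summation over $\sigma^{(n)}\in\Phi^{W_n}$ in \eqref{5}. Because each boundary vertex $y\in W_n$ is adjacent to exactly one vertex of $W_{n-1}$, namely $y_\downarrow$, and carries its own field $h_y$, this sum factorizes completely over $y$, and after regrouping the factors by their predecessors it becomes a product over $x\in W_{n-1}$ of the local quantities $\prod_{y\in S(x)}\big(\sum_{t=1}^q\exp\{J\beta\,\sigma(x)\sigma_t+h_y\sigma_t\}\big)$. Comparing with $\mu_{n-1}(\sigma_{n-1})=Z_{n-1}^{-1}\exp\{-\beta H_{n-1}(\sigma_{n-1})+\sum_{x\in W_{n-1}}h_x\sigma(x)\}$, the factors $\exp\{-\beta H_{n-1}(\sigma_{n-1})\}$ cancel and \eqref{5} is seen to be equivalent to the identity $\prod_{x\in W_{n-1}}\prod_{y\in S(x)}\big(\sum_t\exp\{J\beta\,\sigma(x)\sigma_t+h_y\sigma_t\}\big)=\tfrac{Z_n}{Z_{n-1}}\prod_{x\in W_{n-1}}\exp\{h_x\sigma(x)\}$, required to hold for every $\sigma_{n-1}$.

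The decoupling step is the heart of the argument: the left-hand side depends on $\sigma_{n-1}$ only through the independent spins $\{\sigma(x):x\in W_{n-1}\}$, so the identity separates into one equation per $x$, asserting that the $x$-factor equals $c_x\exp\{h_x\sigma(x)\}$ for a constant $c_x$ with $\prod_x c_x=Z_n/Z_{n-1}$. Eliminating $c_x$ by forming the ratio of the value at $\sigma(x)=\sigma_i$ to the value at $\sigma(x)=\sigma_q$ removes the normalization and, after taking logarithms and using the factorization over $y\in S(x)$, yields $h_x=\sum_{y\in S(x)}F(h_y,\theta)$ componentwise. To identify $F$ explicitly I would use the $\delta$-form of the edge weights (legitimate since \eqref{1} and \eqref{1'} differ only by per-edge constants absorbed into $Z_n$) and note that the natural variables are the ratios $e^{h_y\sigma_i-h_y\sigma_q}$ of boundary weights relative to the reference state $q$; writing $e^{h_i}$ for these ratios, the single-edge sum with predecessor in state $i$ equals $(\theta-1)e^{h_i}+\sum_{j=1}^{q-1}e^{h_j}+1$ and with predecessor in state $q$ equals $\theta+\sum_{j=1}^{q-1}e^{h_j}$, with $\theta=e^{J\beta}$, so the logarithm of their quotient is exactly $F_i$. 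The converse follows by reversing these steps: assuming \eqref{6}, the per-vertex identities hold, the products telescope, and choosing $Z_n/Z_{n-1}=\prod_x c_x$ restores \eqref{5}.

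I expect the main obstacle to be this last identification rather than the decoupling, which is routine once the tree factorization is in place. One must keep careful track of the per-edge constants and of the change of variables from the vector fields $h_y\in\mathbb{R}^{q-1}$ to the state-ratios $e^{h_y\sigma_i-h_y\sigma_q}=e^{\frac{q}{q-1}h_{y,i}}$, so that the fractional powers $\theta^{\pm 1/(q-1)}$ produced by $\sigma_i\sigma_t=-\tfrac{1}{q-1}$ cancel in the quotient and the clean dependence $\theta=e^{J\beta}$ displayed in $F_i$ survives.
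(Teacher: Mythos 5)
Your argument is the same as the paper's: substitute \eqref{4} into \eqref{5}, factorize the sum over $\sigma^{(n)}$ along the tree into per-edge factors grouped by predecessors, decouple the resulting identity vertex by vertex, and eliminate the normalization by taking the ratio of the $\sigma(x)=\sigma_i$ and $\sigma(x)=\sigma_q$ values, followed by the change of variables $\tfrac{q}{q-1}h_{x,i}\to h_{x,i}$ (the paper's displays \eqref{7}--\eqref{8}). If anything you are slightly more explicit than the paper about the bookkeeping of the fractional powers of $\theta$ and about the converse direction, which the paper simply delegates to \cite[p.107]{R}.
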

\begin{proof} In \cite[p.106]{R} this theorem is proved for Hamiltonian (\ref{1}) (i.e. without
change (\ref{2})). Here we shall prove (\ref{6}) for Hamiltonian
(\ref{1'}), i.e. with change (\ref{2}). We show this because, a
formula obtained in this proof will be helpful to find a general
form of the free energy.

   Substituting (\ref{4}) in (\ref{5}), in view of (\ref{2}) we get
$$
\frac{Z_{n-1}}{Z_n}\sum_{\s^{(n)}}\exp\left(J\b \sum_{x\in
W_{n-1}}\left(\sum_{y\in S(x)}\s (x)\s (y)\right)+\sum_{x\in
W_{n-1}}\left(\sum_{y\in S(x)} h_y\s (y) \right)\right)=
$$
$$
\exp\left(\sum_{x\in W_{n-1}}h_x\s(x)\right), \s(x)\in \Phi.
$$

Consequently,
\begin{equation}\label{7}
\frac{Z_{n-1}}{Z_n}\prod_{x\in W_{n-1}}\prod_{y\in
S(x)}\sum_{\s(y)}\exp\left(J\b
\s(x)\s(y)+h_y\s(y)\right)=\prod_{x\in W_{n-1}}\exp(h_x\s(x)),
 \s(x)\in \Phi.
\end{equation}

Fix $x\in W_{n-1}$ and rewrite (\ref{7}) for $\s(x)=\s_i$,
$i=1,\dots,q-1$ and $\s(x)=\s_q$. Then dividing each of them by
the last one we get
\begin{equation}\label{8}
\prod_{y\in S(x)}\frac{\sum_{\s(y)}\exp((J\b
\s_i+h_y)\s(y))}{\sum_{\s(y)}\exp((J\b
\s_q+h_y)\s(y))}=\frac{\exp(h_x \s_i)}{\exp(h_x \s_q)},
i=1,2,...,q-1.
\end{equation}

Notice that $h_x\s_i-h_x\s_q=\frac{q}{q-1}h_{x, i}$, where
$h_x=(h_{x, 1}, \dots , h_{x, q-1}).$ Now we change the variables
as $\frac{q}{q-1}h_{x, i}\to h_{x, i}$ then we get the equation
(\ref{6}). For a proof of the inverse statement see
\cite[p.107]{R}.
\end{proof}

 Let $G_k/G_k^*=\{H_1,...,H_r\}$ be the quotient group, where $G_k^*$ is a normal subgroup of index $r\geq 1$.

\begin{defn} A set of vectors  $h=\{h_x,\, x\in G_k\}$
is said to be $G_k^*$ periodic, if $h_{yx}=h_x$ for any $ x\in
G_k$ and $y\in G^*_k.$
\end{defn}

\begin{defn}  A set of vectors  $h=\{h_x,\, x\in G_k\}$
is said to be  $G_k^*$ weakly periodic, if $h_{x}=h_{ij}$ for
$x\in H_i$ and $x_\downarrow\in H_j$ for any $x\in G_k$.
\end{defn}

\begin{defn} A measure $\mu$ is said to be
$G_k^*$-periodic (weakly periodic), if it corresponds to the
$G_k^*$-periodic (weakly periodic) set of vectors $h$. The $G_k-$
periodic measure is said to be translation-invariant.
\end{defn}

In this paper we compute free energies which correspond to
translation-invariant, periodic, weakly periodic and some non-periodic BCs (Gibbs measures).

 \section{Formula of free energy}

 The following theorem gives a general form of the free energy.

\begin{thm}\label{t2} For BCs satisfying (\ref{5}), the free energy is given by the formula
\begin{equation}\label{E}
E(\b,h)=-\lim_{n\to\infty }\frac{1}{|V_n|}\sum_{x\in V_n}a(x),
\end{equation}
where
\begin{equation}\label{a}
a(x)=\frac{1}{q\b}\sum_{i=1}^q\ln \left(\sum_{u=1}^q \exp\{(J\b
\s_i+h_x)\s_u\} \right).
\end{equation}
\end{thm}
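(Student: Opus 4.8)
The plan is to turn $\ln Z_n$ into a telescoping sum by exploiting the identity (\ref{7}) from the proof of Theorem \ref{t1}, which under the compatibility hypothesis (\ref{5}) holds at every level $n\ge 1$ and, crucially, for \emph{every} boundary configuration on $W_{n-1}$. Written as
\[
\frac{Z_{n-1}}{Z_n}\prod_{x\in W_{n-1}}\ \prod_{y\in S(x)}\sum_{\s(y)}\exp\bigl((J\b\s(x)+h_y)\s(y)\bigr)=\prod_{x\in W_{n-1}}\exp\bigl(h_x\s(x)\bigr),
\]
this identity has the feature that both sides factor over $x\in W_{n-1}$, each factor depending only on the single spin $\s(x)$, which I regard as a free parameter ranging over $\Phi$.

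The key step is to take logarithms and then average over all $q^{|W_{n-1}|}$ boundary configurations, letting each $\s(x)$ range uniformly and independently over $\{\s_1,\dots,\s_q\}$. Because the identity factorizes, the average of $\sum_{x\in W_{n-1}}h_x\s(x)$ equals $\sum_{x\in W_{n-1}}\tfrac1q\sum_{i=1}^q h_x\s_i$, which vanishes identically by (\ref{3}) since $\sum_{i=1}^q\s_i=0$. Similarly, averaging the logarithm of the inner product over $\s(x)=\s_i$ yields, for each $x$,
\[
\frac1q\sum_{i=1}^q\sum_{y\in S(x)}\ln\Bigl(\sum_{u=1}^q\exp\bigl((J\b\s_i+h_y)\s_u\bigr)\Bigr)=\b\sum_{y\in S(x)}a(y),
\]
by the definition (\ref{a}) of $a$. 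Since $\bigcup_{x\in W_{n-1}}S(x)=W_n$, the averaged identity collapses to the recursion $\ln Z_n-\ln Z_{n-1}=\b\sum_{y\in W_n}a(y)$ for all $n\ge1$.

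Telescoping from the single-vertex base level gives $\ln Z_n=\ln Z_0+\b\sum_{x\in V_n\setminus\{x^0\}}a(x)$, where $Z_0=\sum_{i=1}^q\exp(h_{x^0}\s_i)$ is a constant. Dividing by $\b|V_n|$ and letting $n\to\infty$, the terms $\frac{\ln Z_0}{\b|V_n|}$ and the single omitted root summand $\frac{a(x^0)}{|V_n|}$ both tend to $0$ because $|V_n|\to\infty$, so $-\frac{1}{\b|V_n|}\ln Z_n$ and $-\frac{1}{|V_n|}\sum_{x\in V_n}a(x)$ share the same limit, which is exactly (\ref{E}). I expect the main obstacle to be conceptual rather than computational: recognizing that (\ref{7}) holds for \emph{all} boundary configurations and that \emph{averaging} the logarithm is the right operation. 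The relation $\sum_{i}\s_i=0$ is precisely what annihilates the boundary-field contribution and isolates the symmetric local quantity $a(y)$, making each increment depend only on data attached to the vertices of $W_n$; the remaining points---existence of the limit and the negligibility of the $O(1)$ boundary terms---are then routine.
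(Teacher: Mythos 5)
Your proposal is correct and is essentially the paper's own argument in logarithmic form: the paper multiplies the $q$ local identities obtained from (\ref{7})--(\ref{8}) and uses $\sum_{i=1}^q\s_i=0$ to eliminate the $\exp(h_x\s_i)$ factors, which is exactly your ``average the logarithm over boundary configurations'' step, and then telescopes $Z_n=A_{n-1}Z_{n-1}$ just as you telescope $\ln Z_n-\ln Z_{n-1}=\b\sum_{y\in W_n}a(y)$. Your explicit treatment of the $O(1)$ root and $Z_0$ terms is a minor tidying of the same proof, so no substantive difference.
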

\begin{proof} In view of (\ref{8}) we have
$$
\prod_{y\in S(x)}\sum_{u=1}^q\exp(J\b \s_i \s_u+\s_u
h_y)=b(x)\exp(\s_i h_x), \ \ i=1,\dots,q.
$$
Multiplying all such equalities and using (\ref{3}) we obtain
$$
b^q(x)=\prod_{i=1}^q\prod_{y\in S(x)}\sum_{u=1}^q \exp(J\b \s_i
\s_u+\s_u h_y)=\prod_{y\in
S(x)}\prod_{i=1}^q\sum_{u=1}^q \exp(J\b \s_i + h_y)\s_u,
$$
hence
$$
b(x)=\prod_{y\in S(x)}\prod_{i=1}^q \left(\sum_{u=1}^q \exp(J\b
\s_i + h_y)\s_u\right)^{1/q}.
$$
Let $A_n=\prod_{x\in W_n}b(x).$ It is clear that
$Z_n=A_{n-1}Z_{n-1}.$ Consequently
$$Z_n=\prod_{x\in W_{n-1}}b(x)Z_{n-1}=\prod_{x\in
W_{n-1}}b(x)\prod_{x\in W_{n-2}}b(x)Z_{n-2}=...=\prod_{x\in
V_{n-1}}b(x),$$ and
$$
\ln Z_n=\sum_{x\in V_{n-1}}\ln b(x).
$$
Hence we obtain that
$$
a(x)=\frac{1}{q\b}\sum_{i=1}^q \ln\left(\sum_{u=1}^q \exp\{(J\b
\s_i+h_x)\s_u\} \right).
$$
\end{proof}

\section{Free energies corresponding to some BCs}

\subsection{Translation-invariant case}

Consider translation-invariant set of vectors $h_x$, i.e.
$h_x=h=(h_1,h_2,...,h_{q-1})\in R^{q-1}, \forall x\in G_k.$ Then
from (\ref{6}) we obtain
\begin{equation}\label{11}
h_i=k\ln\left({(\theta-1)e^{h_i}+\sum_{j=1}^{q-1}e^{h_j}+1\over \theta+ \sum_{j=1}^{q-1}e^{h_j}}\right),\ \ i=1,\dots,q-1.
\end{equation}
Denoting $z_i=\exp(h_i), i=1,\dots,q-1$, we get from (\ref{11})
\begin{equation}\label{12}
z_i=\left({(\theta-1)z_{i}+\sum_{j=1}^{q-1}z_{j}+1\over \theta+
\sum_{j=1}^{q-1}z_{j}}\right)^k, i=1,2,...,q-1.
\end{equation}

For $k=2$ we denote
\begin{equation}\label{13}
x_1(m)={\theta-1-\sqrt{(\theta-1)^2-4m(q-m)}\over 2m}, \ \
x_2(m)={\theta-1+\sqrt{(\theta-1)^2-4m(q-m)}\over 2m},
\end{equation}
where
$$
\theta\geq \theta_m=1+2\sqrt{m(q-m)}, \ \ m=1,\dots,q-1.$$

It is easy to see that
$$
\theta_m=\theta_{q-m} \ \ \mbox{and} \ \
\theta_{1}<\theta_2<\dots<\theta_{[{q\over 2}]-1}<\theta_{[{q\over
2}]}\leq q+1.$$

 Let $k=2$, $J>0$, then the following statements are known (see \cite{KRK}).
\begin{itemize}
\item[1.]
If $\theta<\theta_1$, then the system of equations  (\ref{11}) has
a unique solution $h_0=(0,0,\dots,0)$;

\item[2.]
If $\theta_{m}<\theta<\theta_{m+1}$ for some $m=1,\dots,[{q\over
2}]-1$, then the system of equations (\ref{11}) has solutions
$$h_0=(0,0,\dots,0), \ \ h_{1i}(s), \ \ h_{2i}(s), \ \ i=1,\dots, {q-1\choose s}, $$
$$h'_{1i}(q-s), \ \ h'_{2i}(q-s), \ \ i=1,\dots, {q-1\choose q-s}, \ \ s=1,2,\dots,m,$$
where $h_{ji}(s)$, (resp. $h'_{ji}(q-s)$)\, $j=1,2$ is a vector
with $s$ (resp. $q-s$) coordinates equal to $2\ln x_j(s)$ (resp.
$2\ln x_j(q-s)$) and remaining $q-s-1$ (resp. $s-1$) coordinates
equal to 0. The number of such solutions is equal to
$$1+2\sum_{s=1}^m{q\choose s};$$

\item[3.] If $\theta_{[{q\over 2}]}<\theta\ne q+1$, then there are $2^q-1$ solutions to (\ref{11});

\item[4] If $\theta=q+1$ then the
number of solutions is as follows
$$\left\{\begin{array}{ll}
2^{q-1}, \ \ \mbox{if} \ \ q \ \ \mbox{is odd}\\[2mm]
2^{q-1}-{q-1\choose q/2}, \ \ \mbox{if} \ \ q \ \ \mbox{is even};
\end{array}\right.$$

\item[5.] If $\theta=\theta_m$, $m=1,\dots,[{q\over 2}]$, \,($\theta_{[{q\over 2}]}\ne q+1$) then  $h_{1i}(m)=h_{2i}(m)$. The number of solutions is equal to
$$1+{q\choose m}+2\sum_{s=1}^{m-1}{q\choose s}.$$\end{itemize}

Thus any solution of (\ref{11}) has the form
\begin{equation}\label{h}
h=(\underbrace{h_*, h_*,\dots,h_*}_m, 0, 0,\dots, 0), \ \ m\geq 0
\end{equation}
up to a permutation of coordinates.

In this subsection we shall calculate free energies and entropy
$S(\b,h)=-\frac{dE(\b, h)}{d T}$ for the set of
translation-invariant vectors $h_x=h$, with $h$ given by (\ref{h}).

{\it Case} $m=0$. In this case $h=h_0=(0,0,...,0)\in R^{q-1}$. From (\ref{E}) we have
$$
E_{TI}(\b,h_0)= -a(x)=-\frac{1}{q\b}\sum_{i=1}^q \ln\left(
\sum_{u=1}^q \exp(J\b\s_i\s_u)\right)=$$
$$-\frac{1}{q\b}\sum_{i=1}^q \ln\left(
\exp(J\b)+(q-1)\exp\left(\frac{J\b}{1-q}\right)\right)=
$$
$$
-\frac{1}{q\b}q\ln\left(
\exp(J\b)+(q-1)\exp\left(\frac{J\b}{1-q}\right)\right)=-J-\frac{1}{\b}\ln
\left(1+(q-1)\exp\left(\frac{Jq\b}{1-q}\right)\right).
$$

The corresponding entropy has the form
$$
S_{TI}(\b, h_0)=-\frac{dE_{TI}(\b, h_0)}{d
T}=\ln\left(1+(q-1)\exp\left(\frac{Jq\b}{1-q}\right)\right)+\frac{Jq\b
\exp\left(\frac{Jq\b}{1-q}\right)}{1+(q-1)\exp\left(\frac{Jq\b}{1-q}\right)}.
$$

{\it Case}  $m\neq 0$. Using (\ref{h}) we shall calculate the free
energy:
$$
E_{TI}(\b, m, h_x)=-\lim_{n\to\infty }\frac{1}{|V_n|}\sum_{x\in
V_n}a(x)=-a(x)=-\frac{1}{q\b}\sum_{i=1}^q \ln\left(\sum_{u=1}^q
\exp\{(J\b \s_i+h_x)\s_u\} \right)=$$
$$-\frac{q-m}{q\b}\ln\left(m\cdot e^{\left(-\frac{J\b}{q-1}+\frac{q-m}{q-1}h_*\right)}+e^{\left(J\b-
\frac{m}{q-1} h_*\right)}+(q-m-1)\cdot e^{\left(-\frac{J\b}{q-1}-\frac{m}{q-1}h_*\right)}\right)-$$
\begin{equation}\label{14}
\frac{m}{q\b}\ln\left((m-1)\cdot e^{\left(-\frac{J\b}{q-1}+\frac{q-m}{q-1}h_*\right)}+
e^{\left(J\b+\frac{q-m}{q-1} h_*\right)}+(q-m)\cdot e^{\left(-\frac{J\b}{q-1}-\frac{m}{q-1}h_*\right)}\right).\end{equation}

Taking into account $h_*=2\ln x_j(m)$, $j=1,2$ (where the $x_j(m)$
are defined by (\ref{13})) we calculate the corresponding entropy:
$$
S_{TI}(\b, m, h_x)=-\frac{dE_{TI}(\b, m, h_x)}{d T}=-E_{TI}(\b, m,
h_x)\b+$$
$$\frac{J\b (q-m)}{q\left(m\cdot
e^{\left(-\frac{J\b}{q-1}+\frac{q-m}{q-1}h_*\right)}+e^{\left(J\b-
\frac{m}{q-1} h_*\right)}+(q-m-1)\cdot
e^{\left(-\frac{J\b}{q-1}-\frac{m}{q-1}h_*\right)}\right)}\times$$
$$
\left[m(1+(q-m)A)
e^{\left(-\frac{J\b}{q-1}+\frac{q-m}{q-1}h_*\right)}+(1-q-mA)e^{\left(J\b-
\frac{m}{q-1} h_*\right)}+\right.$$
$$\left.(q-m-1)(1-mA)
e^{\left(-\frac{J\b}{q-1}-\frac{m}{q-1}h_*\right)}\right]+
$$
$$
 \frac{J\b m}{q\left((m-1)\cdot
e^{\left(-\frac{J\b}{q-1}+\frac{q-m}{q-1}h_*\right)}+
e^{\left(J\b+\frac{q-m}{q-1} h_*\right)}+(q-m)\cdot
e^{\left(-\frac{J\b}{q-1}-\frac{m}{q-1}h_*\right)}\right)}\times$$
$$\left[(m-1)(1+(q-m)A)
e^{\left(-\frac{J\b}{q-1}+\frac{q-m}{q-1}h_*\right)}+((m-q)A-1)e^{\left(J\b-
\frac{m}{q-1} h_*\right)}+\right.$$
\begin{equation}\label{en}
\left.(q-m)(1-mA)
e^{\left(-\frac{J\b}{q-1}-\frac{m}{q-1}h_*\right)}\right],
\end{equation}
where $A=\frac{2e^{J\b}}{(q-1)(e^{J\b}-2me^{\frac{h_*}{2}}-1)}$.

\begin{rk} We note that the entropy for $ 2\ln x_1(m)$ and $ 2\ln x_2(m)$
can be calculated by the formula (\ref{en}) replacing $h_*$ by $ 2\ln x_1(m)$ and $ 2\ln x_2(m)$ respectively.
\end{rk}

\begin{rk} Notice that under any permutations of the coordinates
of the vector $h_x$ the free energy and entropy
do not change.
\end{rk}

 In Fig.1 graphs of the free energies (\ref{14}) are shown.

\begin{center}
\includegraphics[width=10cm]{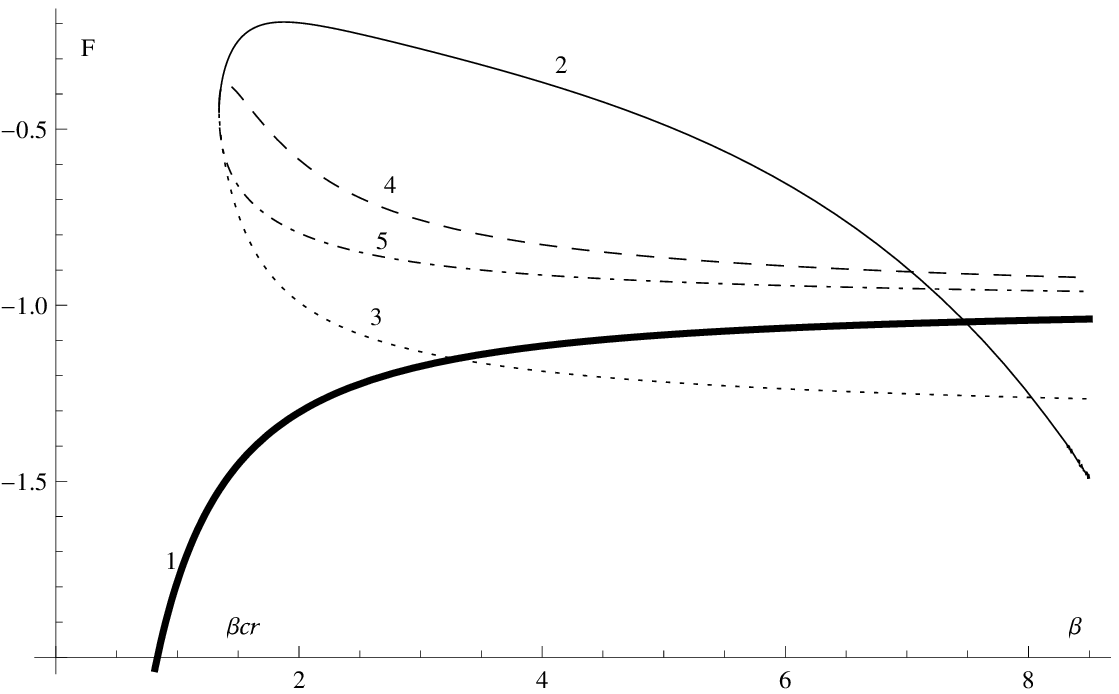}
\end{center}
{\footnotesize \noindent
 Fig.~1.
Case $q=3$. The free energy $F(\b,h_0)$ the bold solid line (line 1); the free
energy $F(\b, (2\ln x_1(1),0))$
solid line (line 2); the free energy $F(\b, (2\ln
x_2(1),0))$ the dotted line (line 3);\\ the free
energy $F(\b, (2\ln x_1(2),2\ln x_1(2)))$ the dashed line
(line 4); the free energy $F(\b, (2\ln x_2(2),2\ln
x_2(2)))$ the dotted-dashed line (line 5), where $x_i(m), i=1,2,
m=1,2$ defined in (\ref{13}).}\\

\subsection{A non-translation-invariant BCs} In this subsection we consider non-translation-invariant BCs (Gibbs measures)
 constructed by N.Ganikhadjaev in \cite{G}. Here one considers the
half tree. Namely the root $x^0$ has $k$ nearest neighbors.
Consider an infinite path $\pi=\{x^0=x_0<x_1<\dots\}$  (the
notation $x<y$ meaning that  pathes from the root to $y$ go
through $x$).
Take two different solutions $h_*^1$ and $h_*^2$  of (\ref{11}) (having the form (\ref{h})).  Associate  to this  path a collection $h^\pi$ of
vectors given by the condition
\begin{equation}\label{b3.1}
h_x^\pi=\left\{\begin{array}{ll}
h^1_*, \ \ \mbox{if} \ \ x\prec x_n, \, x\in W_n,\\[2mm]
h^2_*, \ \ \mbox{if} \ \ x_n\preceq x, \, x\in W_n,
\end{array}\right.
\end{equation}
$n=1,2,\dots$ where $x\prec x_n$ (resp. $x_n\prec x$) means that
$x$ is on the left (resp. right) from the path $\pi$.

For a given infinite path $\pi$ we put
$${\mathcal W}_n^\pi=\{x\in W_n:
x\prec x_n\},$$ where $n=1,2,\dots$.
\begin{lemma} For any $\pi$ there exists the limit
\begin{equation}
\lim_{n\rightarrow \infty}\frac{|{\mathcal
W}_n^\pi|}{|W_n|}=a^\pi.
\end{equation}
\end{lemma}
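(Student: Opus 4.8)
The plan is to encode the vertices of the half-tree by their addresses in the rooted tree and thereby reduce the ratio $|\mathcal{W}_n^\pi|/|W_n|$ to a partial sum of an explicitly convergent series. First I would fix the planar structure used to define $\prec$: label the $k$ direct successors of each vertex by $1,\dots,k$ from left to right, so that every $x\in W_n$ is identified with an address $(i_1,\dots,i_n)\in\{1,\dots,k\}^n$, where $i_m$ records which successor is chosen at level $m$ along the path from the root $x^0$ to $x$. Under this identification the left-to-right order on $W_n$ is exactly the lexicographic order on addresses. The distinguished path $\pi$ is encoded by an infinite sequence $(j_1,j_2,\dots)$ with $j_m\in\{1,\dots,k\}$ and $x_n=(j_1,\dots,j_n)$, and the relation $x\prec x_n$ becomes the strict lexicographic inequality $(i_1,\dots,i_n)<(j_1,\dots,j_n)$.

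Next I would count $|\mathcal{W}_n^\pi|$ by splitting the addresses $(i_1,\dots,i_n)$ that are lexicographically below $(j_1,\dots,j_n)$ according to the first coordinate $m$ at which they drop strictly below the path: the coordinates $i_1,\dots,i_{m-1}$ agree with $j_1,\dots,j_{m-1}$, the coordinate $i_m$ takes one of the $j_m-1$ values in $\{1,\dots,j_m-1\}$, and the remaining $n-m$ coordinates are free. This yields
$$
|\mathcal{W}_n^\pi|=\sum_{m=1}^{n}(j_m-1)\,k^{\,n-m}.
$$
Since each vertex has $k$ successors we have $|W_n|=k^n$, so dividing gives
$$
\frac{|\mathcal{W}_n^\pi|}{|W_n|}=\sum_{m=1}^{n}\frac{j_m-1}{k^{m}}.
$$

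Finally I would identify the right-hand side as the $n$-th partial sum of a series with nonnegative terms: because $1\le j_m\le k$, each summand satisfies $0\le (j_m-1)/k^m\le (k-1)/k^m$, and $\sum_{m\ge 1}(k-1)/k^m=1$. Hence the partial sums form a nondecreasing sequence bounded above by $1$, so they converge, and
$$
a^\pi=\lim_{n\to\infty}\frac{|\mathcal{W}_n^\pi|}{|W_n|}=\sum_{m=1}^{\infty}\frac{j_m-1}{k^{m}}\in[0,1],
$$
which is precisely the point of $[0,1]$ whose base-$k$ expansion has digits $j_m-1$. The only genuinely delicate step is the first one, namely pinning down the meaning of ``$x$ lies to the left of $\pi$'' in terms of the ordering of successors and verifying that the lexicographic count is correct; once the problem is recast in these coordinates, the existence of the limit is an immediate monotone-bounded (equivalently, convergent geometric) argument, so I expect no real obstacle beyond the bookkeeping.
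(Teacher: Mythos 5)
Your proof is correct and follows essentially the same route as the paper: your quantity $j_m-1$ is exactly the paper's $|\Delta_m^\pi|$ (the successors of $x_{m-1}$ lying left of $\pi$), your identity $|\mathcal W_n^\pi|=\sum_{m=1}^n (j_m-1)k^{n-m}$ is the paper's decomposition $|\mathcal W_n^\pi|=\sum_i|\Delta_i^\pi|\,|W_{n-i}|$, and both arguments conclude by observing that $a_n^\pi$ is nondecreasing and bounded by $1$. Your explicit lexicographic addressing (and the identification of $a^\pi$ with a base-$k$ expansion) is a clean, slightly more concrete bookkeeping of the same idea, and it correctly uses $|W_n|=k^n$ for the half-tree where the paper somewhat inconsistently writes $|W_n|=k^{n-1}(k+1)$.
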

\begin{proof} Put
$$\Delta_1^\pi={\mathcal W}_1^\pi;\quad \Delta_n^\pi={\mathcal W}_n^\pi\cap S(x_{n-1}),$$
where $n\geq 2$.

It is easy to see that
$$|{\mathcal W}_n^\pi|=\sum_{i=1}^n|\Delta_i^\pi|\cdot |W_{n-i}|$$
and
$0\leq|\Delta_i^\pi|\leq k-1$, for any $i, \pi.$

Let $a_n^\pi=\frac{|{\mathcal W}_n^\pi |}{|W_n|}.$ Then using
$|W_n|=k^{n-1}(k+1)$ we get
$$
a_{n+1}^\pi-a_n^\pi=\frac{|{\mathcal
W}_{n+1}^\pi|}{|W_{n+1}|}-\frac{|{\mathcal W}_n^\pi
|}{|W_n|}=\frac{\Delta_{n+1}^\pi }{k^n(k+1)}\geq 0,
$$
i.e. the sequence $a_n^\pi$ is a monotone non-decreasing. It is
easy to see that $a_n^\pi < 1$.

Consequently $\lim_{n\rightarrow \infty}\frac{|{\mathcal
W}_n^\pi|}{|W_n|}=a^\pi$ exists and is finite.
\end{proof}

Denote $V_n^l=\cup_{i=1}^n {\mathcal W}_n^\pi.$ By the
Stolz-Ces\'aro theorem (see e.g. \cite{K}) we have

\begin{equation}\label{b}\lim_{n\to\infty
}\frac{|V_n^l|}{|V_n|}=\lim_{n\rightarrow \infty}\frac{|{\mathcal
W}_n^\pi|}{|W_n|}=a^\pi.
\end{equation}

 Now we calculate the free
energy for the set of vectors $h_x^\pi$. In view of (\ref{b}) we
get
$$
E_G(\b, m, h_x^\pi)=-\lim_{n\to\infty }\frac{1}{|V_n|}\sum_{x\in
V_n}a(x)=-\frac{1}{q\b}\lim_{n\to\infty
}\frac{|V_n^l|}{|V_n|}\cdot\sum_{i=1}^q \ln\left(\sum_{u=1}^q
\exp\{(J\b \s_i+h_*^1)\s_u\} \right)-$$
$$-\frac{1}{q\b}\lim_{n\to\infty
}\left(1-\frac{|V_n^l|}{|V_n|}\right)\cdot \sum_{i=1}^q
\ln\left(\sum_{u=1}^q \exp\{(J\b \s_i+h_*^2)\s_u\} \right)=$$
\begin{equation}\label{EG}
a^\pi E_{TI}(\b,m,h_*^1)+(1-a^\pi) E_{TI}(\b,m,h_*^2),
\end{equation} where $E_{TI}(\b,m,h)$ is defined in (\ref{14}).

Using (\ref{en}) and (\ref{EG}) we get the following formula for
the corresponding entropy:
$$
S_G(\b, m, h_x^\pi)=a^\pi S_{TI}(\b, m, h^1_x)+(1-a^\pi)S_{TI}(\b,
m, h^2_x).
$$

\subsection{Periodic BCs.}

In this subsection we consider periodic BCs and will be
calculating free energies for them.

We consider the case $q=3$. It is known (see \cite{RK}) that there are only
 $G^{(2)}_k$-periodic Gibbs measures, where $G^{(2)}_k$ is the set of all words of even lengths.
 The corresponding set of vectors $h=\{h_x\in R^{q-1}: \, x\in
G_k\}$ has the form
$$h_x=\left\{%
\begin{array}{ll}
    {\bf h}^1, \ \ \mbox{if} \ \  x\in G_k^{(2)}, \\[2mm]
    {\bf h}^2, \ \ \mbox{if} \ \  x\in G_k\setminus G_k^{(2)},
\end{array}%
\right. $$
 where ${\bf h}^1=(h_1^1,h_2^1)$, ${\bf h}^2=(h_1^2,h_2^2)$.

In view of (\ref{6}) we have
\begin{equation}\label{ph}
\left\{%
\begin{array}{ll}
    h_{1}^1=k\ln\left({\theta\exp(h_1^2) + \exp(h_2^2)+1\over \exp(h_1^2) + \exp(h_2^2)+\theta}\right)\\[3 mm]
    h_{2}^1=k\ln\left({\exp(h_1^2) +\theta \exp(h_2^2)+1\over \exp(h_1^2) + \exp(h_2^2)+\theta}\right) \\[3 mm]
    h_{1}^2=k\ln\left({\theta\exp(h_1^1) + \exp(h_2^1)+1\over \exp(h_1^1) + \exp(h_2^1)+\theta}\right) \\[3 mm]
    h_{2}^2=k\ln\left({\exp(h_1^1) + \theta\exp(h_2^1)+1\over \exp(h_1^1) + \exp(h_2^1)+\theta}\right). \\
\end{array}%
\right.
\end{equation}

For $k=3$,\, $J<0$ it is known (see \cite{Hakimov}) that if
$0<\theta<{1\over 4}$ then the system (\ref{ph}) has at least two
solutions of the form: $h=(h^1,h^1,h^2,h^2)$. Now for such set of
periodic vectors $h_x$, we shall describe free energies. From
(\ref{a}) we have
 $$a(x)=\left\{%
\begin{array}{ll}
    d(\textbf{h}^1),$ if $ x\in G_k^{(2)} \\
    d(\textbf{h}^2),$ if $ x\in G_k\setminus G_k^{(2)}, \\
\end{array}%
\right.$$  where
\begin{equation}\label{15}
 d(\textbf{h})=\frac{1}{q\b}\sum_{i=1}^q \left(\sum_{u=1}^q \exp\{(J\b
\s_i+\textbf{h})\s_u\} \right).
\end{equation}

Denote
$$V_{even,n}=\{x\in V_n:
x\in G_k^{(2)}\}, \ \  V_{odd,n}=\{x\in V_n: x\in G_k\setminus
G_k^{(2)}\}.$$ It is easy to check that for $n=2p$,
$$
 |V_{even,
2p}|=\frac{k^{2p+1}-1}{k-1},\ \  |V_{odd,
2p}|=\frac{k^{2p}-1}{k-1},
$$
for $n=2p+1,$
$$
  |V_{even, 2p+1}|=\frac{k^{2p}-1}{k-1}, \ \ |V_{odd,
2p+1}|=\frac{k^{2p+2}-1}{k-1},\ \
$$ also we have $$|V_n|=\frac{(k+1)k^n-2}{(k-1)}.$$

Using these formulas we calculate the free energy:
$$
E_{per}(\b, h_x)=-\lim_{n\to \infty}\frac{|V_{even,
n}|d(\textbf{h}^1)+|V_{odd, n}|d(\textbf{h}^2)}{|V_n|}=$$
$$-\lim_{n\to \infty}\left\{%
\begin{array}{ll}
    \frac{|V_{even,
2p}|d(\textbf{h}^1)+|V_{odd, 2p}|d(\textbf{h}^2)}{|V_{2p}|} \ \ \mbox{if} \ \  n=2p \\[3mm]
   \frac{|V_{even,
2p+1}|d(\textbf{h}^1)+|V_{odd, 2p+1}|d(\textbf{h}^2)}{|V_{2p+1}|} \ \ \mbox{if} \ \  n=2p+1,  \\
\end{array}%
\right.=
$$
$$ -\frac{1}{k+1}\left\{%
\begin{array}{ll}
   kd(\textbf{h}^1)+d(\textbf{h}^2) \ \ \mbox{if} \ \  n=2p, \, p\rightarrow \infty\\[3mm]
   d(\textbf{h}^1)+kd(\textbf{h}^2)  \ \ \mbox{if} \ \ n=2p+1.  \\
\end{array}%
\right.
$$

From this equality it follows that if $d(\textbf{h}^1)\neq
d(\textbf{h}^2)$, then for the periodic BCs a free energy does not
exist. For $k=q=3$ and fixed $\theta=\frac{1}{5}$, using a
computer analysis one can see that $d(\textbf{h}^1)\neq
d(\textbf{h}^2)$.

\begin{rk} It is known (see \cite{GRRR}) that for periodic
Gibbs measures of the Ising model on Cayley trees the free
energies exist. But for the Potts model we proved that free energy
of periodic Gibbs measures may not exist.
\end{rk}
\subsection{Weakly periodic BCs.}

Construct a weakly periodic BC. For $A\subset \{1,2,...,k+1\}$ we
consider $H_A=\{x\in G_k: \sum_{j\in A}w_j(x)-$even$\},$ where
$w_j(x)$ is the number of $a_j$ in a word $x$,
$G_k/H_A=\{H_A,G_k\setminus H_A\}$ is a quotient group. For
simplicity, we set $H_0=H_A,$ $ H_1=G_k\setminus H_A$. The $H_A$ -
weakly periodic sets of vectors $h=\{h_x\in R^{q-1}: \, x\in
G_k\}$ have the following form
\begin{equation}\label{16}
h_x=\left\{%
\begin{array}{ll}
    h_1, & \textrm{if} \ \ {x_{\downarrow} \in H_0}, \ x \in H_0 \\
    h_2, & \textrm{if} \ \ {x_{\downarrow} \in H_0}, \ x \in H_1 \\
    h_3, & \textrm{if} \ \ {x_{\downarrow} \in H_1}, \ x \in H_0 \\
    h_4, & \textrm{if} \ \ {x_{\downarrow} \in H_1}, \ x \in H_1. \\
\end{array}%
\right.
\end{equation}
Here $h_i=(h_{i1},h_{i2},...,h_{iq-1}),$ $i=1,2,3,4.$ By (\ref{6}), we have
\begin{equation}\label{17}
\left\{%
\begin{array}{ll}
    h_{1}=(k-|A|)F(h_{1},\theta)+|A|F(h_{2},\theta) \\
    h_{2}=(|A|-1)F(h_{3},\theta)+(k+1-|A|)F(h_{4},\theta) \\
    h_{3}=(|A|-1)F(h_{2},\theta)+(k+1-|A|)F(h_{1},\theta) \\
    h_{4}=(k-|A|)F(h_{4},\theta)+|A|F(h_{3},\theta).\\
\end{array}%
\right.
\end{equation}

In \cite{RM} it was shown that for $|A|=k$, $k\geq 6$ the system
of equations (\ref{17}) has at least two (not
translation-invariant) solutions, which generate sets of vectors
$h_x$ of the form of (\ref{16}), where all coordinates of vectors
$h_i, i=1,2,3,4$ are equal and $h_i\neq h_j$ for $i\neq j$. Now
for such weakly periodic sets of vectors $h_x$, we calculate the
corresponding free energy.

We introduce the following
$$\mathcal A_n=\left|\left\{\langle x, y\rangle\in L_n:\, x\in H_0,\,
y=x_{\downarrow}\in H_0\right\}\right|,$$
$$
\mathcal B_n=\left|\left\{\langle x, y\rangle\in L_n:\, x\in
H_0,\, y=x_{\downarrow}\in H_1\right\}\right|,
$$
$$\mathcal C_n=\left|\left\{\langle x, y\rangle\in L_n:\, x\in H_1,\,
y=x_{\downarrow}\in H_0\right\}\right|,$$
$$\mathcal D_n=\left|\left\{\langle x, y\rangle\in L_n:\, x\in H_1,\,
y=x_{\downarrow}\in H_1\right\}\right|,
$$
where $L_n$ is a set of edges in $V_n$.

It is known from \cite{GRRR} that

$$\lim_{n\to\infty} {\mathcal A_{n}\over |V_n|}=\lim_{n\to\infty} {\mathcal D_{n}\over |V_n|}={{1}\over 2(k+1)},$$
$$\lim_{n\to\infty} {\mathcal B_{n}\over |V_n|}=\lim_{n\to\infty} {\mathcal C_{n}\over |V_n|}={k\over 2(k+1)}.$$

Using these formulas we calculate the free energy:
$$E_{WP}(\b, q, h_x)=-\lim_{n\to\infty }\frac{1}{|V_n|}(\mathcal A_n
d(h_1)+\mathcal B_n d(h_2)+\mathcal C_n d(h_3)+\mathcal D_n
d(h_4))=$$
$$\frac{1}{2(k+1)}\left(E_{TI}(\b,q-1,h_1)+k
E_{TI}(\b,q-1,h_2)+kE_{TI}(\b,q-1,h_3)+E_{TI}(\b,q-1,h_4)\right),$$
where
 $E_{TI}(\b,m,h_x)$ is defined in (\ref{14}).
Now the entropy is
$$ S_{WP}(\b, q,
h_x)=\frac{1}{2(k+1)}\left(S_{TI}(\b,q-1,h_1)+k
S_{TI}(\b,q-1,h_2)+\right.$$$$\left.
kS_{TI}(\b,q-1,h_3)+S_{TI}(\b,q-1,h_4)\right),$$ where
 $S_{TI}(\b,m,h_x)$ is defined in (\ref{en}).

\end{document}